\lstdefinestyle{numbered}{
  frame=L,
  xleftmargin=\parindent,
  numbers = left
language=C,
showstringspaces=false,
basicstyle=\footnotesize\ttfamily,
keywordstyle=\bfseries\color{green!40!black},
commentstyle=\itshape\color{purple!40!black},
identifierstyle=\color{blue},
stringstyle=\color{orange},
}
\def\OneLineIf#1#2{\State\algorithmicif\ #1\ \algorithmicthen\ #2}
\algnewcommand\algorithmicbreak{\textbf{break}}
\algnewcommand\algorithmicto{\textbf{to}}
\algnewcommand\Break{\algorithmicbreak{} }
  \title{\textsc{MergeShuffle}:\\A Very Fast, Parallel Random Permutation
    Algorithm}
\author{Axel Bacher%
  \thanks{Electronic address: \texttt{axel.bacher@lipn.univ-paris13.fr}}~}
\author{Olivier Bodini%
  \thanks{Electronic address: \texttt{olivier.bodini@lipn.univ-paris13.fr}}~}
\author{Alexandros Hollender%
  \thanks{Electronic address: \texttt{halexandros@web.de}}~}
\author{J\'er\'emie Lumbroso%
  \thanks{Electronic address: \texttt{lumbroso@cs.princeton.edu};
    corresponding author}}
\affil{}
\date{\today}
\begin{document}

\maketitle

\begin{abstract}
  This article introduces an algorithm, \textsc{MergeShuffle}, which is an
  extremely efficient algorithm to generate random permutations (or to
  randomly permute an existing array). It is easy to implement, runs in
  $n\cramped{\log_2} n + O(1)$ time, is in-place, uses
  $n\cramped{\log_2} n + \Theta(n)$ random bits, and can be parallelized
  accross any number of processes, in a shared-memory PRAM model. Finally,
  our preliminary simulations using OpenMP\footnote{Full code available
    at: \url{https://github.com/axel-bacher/mergeshuffle}}
  suggest it is more efficient than the Rao-Sandelius algorithm, one of
  the fastest existing random permutation algorithms.

  We also show how it is possible to further reduce the number of random
  bits consumed, by introducing a second algorithm
  \textsc{BalancedShuffle}, a variant of the Rao-Sandelius algorithm which
  is more conservative in the way it recursively partitions arrays to be
  shuffled. While this algorithm is of lesser practical interest, we
  believe it may be of theoretical value.
\end{abstract}

Random permutations are a basic combinatorial object, which are useful in
their own right for a lot of applications, but also are usually the
starting point in the generation of other combinatorial objects, notably
through bijections.

The well-known Fisher-Yates shuffle~\cite{FiYa48,
  Durstenfeld64} iterates through a sequence from the end to the beginning
(or the other way) and for each location $i$, it swaps the value at $i$
with the value at a random target location $j$ at or before $i$. This
algorithm requires very few steps---indeed a random integer and a swap at
each iteration---and so its efficiency and simplicity have until now stood
the test of time.

\begin{algorithm}
  \caption{The classical Fisher-Yates shuffle~\cite{FiYa48} to generate
    random permutations, as per
    Durstenfeld~\cite{Durstenfeld64}.\label{alg-fy-shuffle}}
\begin{algorithmic}[1]
\Procedure{FisherYatesShuffle}{$T$}
\For{$i$}{$n-1$}{$0$}
  \State $j \gets$ random integer from $\{0, \ldots, i\}$
  \State \Call{Swap}{$T$, $i$, $j$}
\EndFor
\EndProcedure
\end{algorithmic}
\end{algorithm}

But there have been two trends in trying to improve this algorithm: first,
initially the algorithm assumes some source of randomness that allows for
discrete uniform variables, but this there has been a shift towards
measuring randomness better with the random bit model; second, with the
avent of large core clusters and GPUs, there is an interest in making
parallel versions of this algorithm.

\paragraph{The random-bit model.}

Much research has gone into simulating probability distributions, with
most algorithms designed using infinitely precise \emph{continuous uniform
  random} variables (see \cite[II.3.7]{Devroye86}). But because
(pseudo-)randomness on computers is typically provided as 32-bit
integers---and even bypassing issues of true randomness and bias---this
model is questionable. Indeed as these integers have a fixed precision,
two questions arise: when are they not precise enough? when are they too
precise? These are questions which are usually ignored in typical
fixed-precision implementations of the aforementioned algorithms. And it
suggests the usefulness of a model where the unit of randomness is not the
uniform random variable, but the \emph{random bit}.

This random bit model was first suggested by Von Neumann~\cite{Neumann51},
who humorously objected to the use of fixed-precision pseudo-random
uniform variates in conjunction with transcendant functions approximated
by truncated series. His remarks and algorithms spurred a fruitful line of
theoretical research seeking to determine \emph{which} probabilities can
be simulated using only random bits (unbiased or biased? with known or
unknown bias?), with which complexity (expected number of bits used?), and
which guarantees (finite or infinite algorithms? exponential or
heavy-tailed time distribution?). Within the context of this article, we
will focus on designing practical algorithms using unbiased random bits.

In 1976, Knuth and Yao~\cite{KnYa76} provided a rigorous theoretical
framework, which described generic optimal algorithms able to simulate any
distribution. These algorithms were generally not practically usable:
their description was made as an infinite tree---infinite not only in the
sense that the algorithm terminates with probability $1$ (an unavoidable
fact for any probability that does not have a finite binary expansion),
but also in the sense that the description of the tree is infinite and
requires an infinite precision arithmetic to calculate the binary
expansion of the probabilities.

In 1997, Han and Hoshi~\cite{HaHo97} provided the \emph{interval
  algorithm}, which can be seen as both a generalization and
implementation of Knuth and Yao's model. Using a random bit stream, this
algorithm amounts to simulating a probability $p$ by doing a binary search
in the unit interval: splitting the main interval into two equal
subintervals and recurse into the subinterval which contains $p$. This
approach naturally extends to splitting the interval in more than two
subintervals, not necessarily equal. Unlike Knuth and Yao's model, the
interval algorithm is a concrete algorithm which can be readily
programmed... as long as you have access to arbitrary precision arithmetic
(since the interval can be split to arbitrarily small sizes). This work
has recently been extended and generalized by Devroye and
Gravel~\cite{DeGr15}.

We were introduced to this problematic through the work of Flajolet,
Pelletier and Soria~\cite{FlPeSo11} on \emph{Buffon machines}, which are a
framework of probabilistic algorithms allowing to simulate a wide range of
probabilities using only a source of random bits.

One easy optimization of the Fisher-Yates algorithm (which we use in our
simulations) is to use an recently discovered optimal way of drawing
discrete uniform variables~\cite{Lumbroso13}.

\paragraph{Prior Work in Parallelization.}

There has been in particular a great deal of interest in finding efficient
parallel algorithms to randomly generate permutations, in various many
contexts of parallelization, some theoretical and some
practical~\cite{Gustedt03, Gustedt08, Sanders98, Hagerup91, Alonso96,
  CoBa06, CzKaKuLo98, Anderson90}.

Most recently, Shun~\etal~\cite{ShGuBlFiGi15} wrote an enlightening
article, in which they looked at the intrinsic parallelism inherent in
classical sequential algorithms, and these can be broken down into
independent parts which may be executed separately. One of the algorithms
they studied is the Fisher-Yates shuffle. They considered the insertion of
each element of the algorithm as a separate part, and showed that the
dependency graph, which provides the order in which the parts must be
executed, is a random binary search tree, and as such, is well known to
have on average a logarithmic height~\cite{Devroye86}. This allowed them
to show that the algorithm could be distributed on $n/\log n$ processors.

Because they aimed for generality (and designed a framework to adapt other
similar sequential algorithms), their resulting algorithm is not as
optimized as can be.

We believe our contribution improves on this work by providing a parallel
algorithm with similar guarantees, and which runs, in practice, extremely
fast.

\begin{algorithm}
\caption{The \textsc{MergeShuffle}{} algorithm.\label{alg-mergeshuffle}}
\begin{algorithmic}[1]
\Procedure{MergeShuffle}{$T$, $k$}%
\Comment{$k$ is the cut-off threshold at which to shuffle with Fisher-Yates.}
\State Divide $T$ into $\cramped{2^k}$ blocks of roughly the same size
\State Shuffle each block independently using the Fisher-Yates shuffle
\State $p \gets k$
\Repeat
  \State Use the \Call{Merge}{} procedure to merge adjacent blocks of size
  $\cramped{2^p}$ into new blocks of size $\cramped{2^{p+1}}$
  \State $p \gets p + 1$
\Until{$T$ consists of a single block}
\EndProcedure
\end{algorithmic}
\end{algorithm}

\paragraph{Splitting Processes.} Relatively recently,
Flajolet~\etal~\cite{FlPeSo11} formulated an elegant random permutation
algorithm which uses only random bits, using the \emph{trie} data
structure, which models a splitting process: associate to each element of
a set $x\in S$ an \emph{infinite} random binary word $\cramped{w_x}$, and
then insert the key-value pairs $(\cramped{w_x}, x)$ into the trie; the
ordering provided by the leaves is then a random permutation.

This general concept is elegant, and it is optimized in two ways:
\begin{itemize}[noitemsep]
\item the binary words thus do not need to be infinite, but only long
  enough to completely distinguish the elements;
\item the binary words do not need to be drawn \emph{a~priori}, but may be
  drawn one bit (at each level of the trie) at a time, until each element
  is in a leaf of its own.
\end{itemize}
This algorithm turns out to have been already exposed in some form in the
early 60's, independently by Rao~\cite{Rao61} and by
Sandelius~\cite{Sandelius62}. Their generalization extends to the case
where we split the set into $R$ subsets (and where we would then draw
random integers instead of random bits), but in practice the case $R=2$ is
the most efficient. The interest of this algorithm is that it is, as far
as we know, the first example of a random permutation algorithm which was
written to be parallelized.

\section{The \textsc{MergeShuffle} algorithm}

The new algorithm which is the central focus of this paper was designed by
progressively optimizing a splitting-type idea for generating random
permutation which we discovered in Flajolet~\etal~\cite{FlPeSo11}. The
resulting algorithm closely mimics the structure and behavior of the
beloved \textsc{MergeSort} algorithm. It gets the same guarantees as this
sorting algorithm, in particular with respect to running time and being in-place.

To optimize the execution of this algorithm, we also set a cut-off
threshold, a size below which permutations are shuffled using the
Fisher-Yates shuffle instead of increasingly smaller recursive calls. This
is an optimization similar in spirit to that of \textsc{MergeSort}, in
which an auxiliary sorting algorithm is used on small instances.

\subsection{In-Place Shuffled Merging}

The following algorithm is the linchpin of the MergeShuffle algorithm. It
is a procedure that takes two arrays (or rather, two adjacent ranges of an
array $T$), both of which are assumed to be randomly shuffled, and
produces a shuffled union.

Importantly, this algorithm uses very few bits. Assuming a two equal-sized
sub-arrays of size $k$ each, the algorithm requires
$2k + \Theta(\sqrt{k}\log k)$ random bits, and is extremely efficient in
time because it requires no auxiliary space. (We show an a

\begin{algorithm}
\caption{In-place shuffled merging of two random sub-arrays.}
\label{alg-flip-merge}
\begin{algorithmic}[1]
\Procedure{Merge}{$T$, $s$, $\cramped{n_{1}}$, $\cramped{n_2}$}
\State $i \gets s$
\Comment{$i$, $j$, $n$ are the beginning, middle, and end position
  considered in the array.}
\State $j \gets s+\cramped{n_{1}}$
\State $n \gets s+\cramped{n_{1}}+\cramped{n_{2}}$
\Loop
  \If{\Call{Flip}{{}} = 0}
  \Comment{Flip a coin to determine which sub-array to take an element from.}

    \OneLineIf{$i=j$}{\Break}
  \Else
    \OneLineIf{$j=n$}{\Break}
    \State \Call{Swap}{$T$, $i$, $j$}
    \State $j \gets j + 1$
  \EndIf
  \State $i \gets i + 1$
\EndLoop
\While{$i < n$}
\Comment{One list is depleted; use Fisher-Yates to finish merging.}
  \State Draw a random integer $m \in \{s, \ldots, i\}$
  \State \Call{Swap}{$T$, $i$, $m$}
  \State $i \gets i + 1$
\EndWhile
\EndProcedure
\end{algorithmic}
\end{algorithm}

\begin{lemma}
  Let $A$ and $B$ be two randomly shuffled arrays, respectively of sizes
  $\cramped{n_1}$ and $\cramped{n_2}$. Then the procedure \textsc{Merge}
  produces a randomly shuffled union $C$ of these arrays, of size
  $n=\cramped{n_1}+\cramped{n_2}$.
\end{lemma}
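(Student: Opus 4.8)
The plan is to show that the ordering returned by \textsc{Merge} is uniform over all $n!$ arrangements of the $n$ elements, given that $A$ and $B$ arrive as independent uniform shuffles. The key enabling remark is that \textsc{Merge} decides every swap from coin flips and drawn integers alone, never from the element values. I may therefore condition on all of these random choices and regard the two input shuffles as independent uniform orderings fed into a rearrangement that is fixed once the coins are fixed. The task then splits into understanding how the main loop and the clean-up loop jointly act on this structured, non-uniform input.

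First I would analyze the main loop through its invariant: at the top of the loop the cells $[s,i)$ hold the merged prefix, $[i,j)$ the not-yet-placed elements of $A$, and $[j,n)$ those of $B$. A \flip{} equal to $0$ dequeues the front of the $A$-region into the prefix, while a \flip{} equal to $1$ swaps the front of the $B$-region into the prefix and, as a side effect, cycles the current front of $A$ to the back of the $A$-region. Two consequences matter: the $B$-elements are emitted in their original order, whereas the $A$-elements are emitted in an order that is a \emph{fixed bijection} of their input order, the bijection being determined entirely by the flip pattern. The loop halts exactly when a flip calls for an element of an already-exhausted array. I would condition on which array is exhausted, on the number $r$ of elements placed from the other array, and on the resulting tail (a suffix of $B$ in the displayed case; the case where $B$ empties first is analogous, with $A$ now the array that is both rotated and split between prefix and tail, handled by the same fixed-bijection principle).

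The heart of the argument is a two-part uniformity claim for the prefix. Because every \flip{} is a fair independent bit, each admissible interleaving pattern of $n_1$ $A$-slots and $r$ $B$-slots is produced with the same probability $2^{-(n_1+r+1)}$, so conditioned on the exhaustion type and on $r$ the pattern is uniform over the $\binom{n_1+r}{r}$ possibilities. Moreover, a fixed bijection of a uniform shuffle is again uniform, so the $A$-elements fall into their slots in uniformly random order and, once we further condition on the tail (which fixes the complementary set of $B$-elements), so do the $B$-elements, independently. Multiplying the pattern count by the two within-slot orderings gives the book-keeping identity $\binom{n_1+r}{r}\,n_1!\,r! = (n_1+r)!$, whence the prefix is uniform over \emph{all} arrangements of its own element set. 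I would then isolate a clean-up lemma: the final \textbf{while} loop, which draws $m$ uniformly in $\{s,\ldots,i\}$ and swaps, is precisely a forward Fisher--Yates continuation, and by the standard insertion induction it upgrades any prefix that is uniform over arrangements of its element set into a full array uniform over all arrangements of the entire element set. Composing the two gives uniformity of the output conditioned on (exhaustion type, $r$, tail); since the conclusion is the same uniform law for every conditioning value, averaging establishes the lemma.

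The main obstacle I anticipate is that neither half of the algorithm is uniform on its own: the flip loop produces interleavings whose distribution genuinely depends on $r$, and the swap step further scrambles the $A$-elements by a pattern-dependent permutation. The crux is to see that these two defects cancel -- conditioning on $r$ restores uniformity of the pattern, the principle that a fixed bijection of a uniform object stays uniform neutralizes the scrambling, and the concluding Fisher--Yates pass is exactly what re-randomizes the tail -- with the identity $\binom{n_1+r}{r}\,n_1!\,r! = (n_1+r)!$ as the glue. A secondary point to verify carefully is the boundary case $r = n_2$, in which both arrays empty simultaneously and the tail is empty, so that one must check it falls under the same count rather than needing separate treatment.
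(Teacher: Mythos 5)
Your proof is correct and takes essentially the same route as the paper's: condition on which array is depleted and on how many elements remain after the flip loop, show the merged prefix is uniformly shuffled because all bit-words of a given length are drawn with equal probability, then invoke the standard Fisher--Yates insertion invariant for the clean-up loop, and average over the conditioning (the paper's appeal to Bayes). If anything, your write-up is more careful than the paper's on one point it passes over silently---that the swaps in the main loop cyclically displace the $A$-elements, so their emission order is a flip-dependent rearrangement of the input order, which your fixed-bijection principle (a deterministic rearrangement of a uniform shuffle is uniform) is exactly what justifies---and your anticipated boundary case of simultaneous depletion does indeed fall under the same count, with the terminating bit deciding the exhaustion label.
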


\begin{proof}
  For every integer $k\geqslant 0$, let $\cramped{A_k}$ be the event that,
  after the execution of the first loop (lines~5 to~14) of the procedure
  \Call{Merge}{}, $k$ elements of the list~$A$ remain ($j = n$ and
  $i = n-k$). Similarly, let $\cramped{B_k}$ be the event that $k$
  elements of the list~$B$ remain ($i = j = n-k$). We prove that,
  conditionally to every $\cramped{A_k}$ and $\cramped{B_k}$, the array is
  randomly shuffled after the procedure. We can then conclude from Bayes's
  theorem shows that this is also true unconditionally.

  Let $k \geqslant 0$ and condition by the event $\cramped{A_k}$ (the case
  of $\cramped{B_k}$ is identical). After the execution of the first loop,
  the $n-k$ first elements of the array consist of: $\cramped{n_1} - k$
  elements of $A$; and all $n_2$ elements of~$B$. Let $w$ be the word
  composed of the $n-k+1$ random bits drawn by the first loop. The
  word~$w$ ends with a~$1$ (this bit corresponds to picking an element
  from $B$, which, at that point is depleted, causing the loop to be
  exited). Among the remaining $n-k$ bits, $\cramped{n_1}-k$ are $0$'s and
  $\cramped{n_2}$ are $1$'s, and for $0\leqslant i < n-k$, the element
  $C[i]$ is from~$A$ if $w_i = 0$ and from $B$ otherwise. Since~$A$ and
  $B$ are randomly shuffled and since all words~$w$ are drawn with equal
  probability, this implies that the first~$n-k$ elements are randomly
  shuffled.

  Finally, we use the following loop invariant, which is the same loop
  invariant as in the proof of the Fisher-Yates algorithm: after every
  execution of the second loop (lines~15 to~19), the first $i$ elements of
  the array are randomly shuffled. This shows that the array is randomly
  shuffled after the whole procedure.
\end{proof}


\begin{lemma}
  The procedure \textsc{Merge} produces a shuffled array $C$ of size $n$
  using $n + \Theta(\sqrt{n}\log n)$ random bits.
\end{lemma}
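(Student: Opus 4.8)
The plan is to split the bit count according to the two loops of \textsc{Merge} and to show that the only source of the lower-order term is the cleanup performed by the second loop. Throughout, let $k$ denote the number of elements remaining in the non-depleted list when the first loop exits, i.e.\ the index of the event $A_k$ or $B_k$ from the previous lemma; the total number of bits used is a random quantity, so the claim should be read as a bound on its expectation.

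First I would count the bits used by the first loop. Each iteration of that loop consumes exactly one call to $\flip$, and by the loop invariant established above every non-terminating iteration advances $i$ by one while placing a single element in its final position. Since there are $n-k$ such placements and the loop then ends with exactly one terminating flip (the one that selects the depleted list), the first loop uses exactly $n-k+1$ random bits. This matches the length of the word $w$ in the previous lemma and needs no probabilistic estimate.

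Next I would bound the second loop. It performs exactly $k$ iterations, and each of them draws a uniform integer from a range of size $i-s+1$, which lies between $n-k+1$ and $n$. Drawing a uniform integer from a set of size $r$ costs $\log_2 r + O(1)$ bits in expectation --- for instance by the optimal method of~\cite{Lumbroso13}, or by plain rejection, whose acceptance probability is at least $1/2$. Since $k$ will turn out to be $O(\sqrt n)$, every range here has size $\Theta(n)$, so each draw costs $\Theta(\log n)$ bits and the second loop uses $\Theta(k\log n)$ bits in expectation, conditionally on $k$.

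The heart of the argument is therefore the distribution of $k$. Here I would observe that the first loop is exactly the \emph{Banach matchbox} process: at each step a fair coin decides whether the next element comes from $A$ or from $B$, and the loop halts the first time the chosen list is found empty, leaving $k$ elements in the other. For the balanced case $n_1=n_2=n/2$ used by \textsc{MergeShuffle}, the classical analysis gives $\Pr[k=j]=\binom{n-j}{n/2}\,2^{-(n-j)}$, whence $E[k]=\Theta(\sqrt n)$, together with a concentration bound yielding $k=O(\sqrt n)$ with high probability (which retroactively justifies the range estimate above). Combining the three steps, the expected total is $(n-E[k]+1)+\Theta(E[k]\log n)=n+\Theta(\sqrt n\log n)$, since the $\Theta(\sqrt n\log n)$ term dominates the $\Theta(\sqrt n)$ correction. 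The main obstacle will be the precise estimate $E[k]=\Theta(\sqrt n)$: the bound needs either the explicit matchbox probabilities combined with the central-binomial estimate $\binom{n}{n/2}\sim 2^{n}/\sqrt{\pi n/2}$, or a direct argument bounding the expected deviation of a fair random walk of length $n$. A secondary technical point is ensuring the $O(1)$ per-draw overhead does not accumulate beyond the main term, which holds because it contributes only $O(E[k])=O(\sqrt n)$.
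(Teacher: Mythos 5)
Your proposal is correct and follows essentially the same route as the paper: the paper's exact sums $\sum_{i}\binom{n_1+i}{n_1}2^{-(n_1+1+i)}(\cdots)$ are precisely expectations over your Banach-matchbox distribution of the leftover count $k$, and both arguments rest on the same decomposition (one bit per first-loop iteration, then $\Theta(\log n)$ bits per cleanup draw) together with the estimate $E[k]=\Theta(\sqrt{n})$, which the paper leaves implicit in its final asymptotic step and you state explicitly. One minor simplification available to you: the high-probability bound $k=O(\sqrt{n})$ is not needed to see that each second-loop draw costs $\Theta(\log n)$ bits, since in the balanced case $k\le n/2$ deterministically, so every range has size at least $n/2$.
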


\begin{proof}
The number of random bits used depends again, on the size $m$ of the
word~$w$ drawn during the first loop (lines~5 to~14). Indeed for $m-1$
of the elements of $C$, we will have shuffled them using only a single
bit; for the remaining $2k - m +1$ elements, we must insert them in $C$
by drawing random integer of increasing range $m$, ..., $2k$,

The number of random bits used only depends on the number of times the
first loop (lines~5 to~14) is executed. Indeed for $m-1$ of the elements
of $C$, we will have shuffled them using only a single bit, and for the
remaining $n - m +1$ elements, we must insert them in $C$ by drawing
random integer of increasing range $m$, ..., $n$. The overall average
number of random bits used is
\begin{align*}
m + \sum_{k=m}^n \lceil\log_2 k\rceil.
\end{align*}
The first $m$ bits are used during the first loop and the rest are used to
draw discrete uniform laws during the second loop.

The first loop stops either because we have drawn $n_1+1$ $0$'s or $n_2+1$
$1$'s (whichever occurs first). In the first case, the average number of
random bits used is thus
\begin{align*}
\sum_{i=0}^{n_2} \binom{n_1+i}{n_1} \frac{1}{2^{n_1+1+i}} \left(n_1+1+i + \sum_{k=n_1+1+i}^n \lceil\log_2 k\rceil\right).
\end{align*}
In this expression $i$ represents the number of $1$'s that were drawn before the $(n_1+1)^{th}$ $0$ was drawn.

Similarly we obtain the following expression for the second case
\begin{align*}
\sum_{i=0}^{n_1} \binom{n_2+i}{n_2} \frac{1}{2^{n_2+1+i}} \left(n_2+1+i + \sum_{k=n_2+1+i}^n \lceil\log_2 k\rceil\right).
\end{align*}

The sum of those two expressions gives the average number of random bits used by the algorithm. By using the following upper and lower bound
\begin{align*}
\log_2(m)(n-m+1) \leq \sum_{k=m}^n \lceil\log_2 k\rceil \leq \log_2(n)(n-m+1)
\end{align*}
we obtain the following asymptotic behaviour for the average number of random bits used
\begin{align*}
n + \Theta(\sqrt{n}\log n).
\end{align*}
\end{proof}

\subsection{Average number of random bits of \textsc{MergeShuffle}}

We now give an estimate of the average number of random bits used by our
algorithm to sample a random permutation of size $n$. Let $cost(k)$ denote
the average number of random bits used by a merge operation with an output
of size $k$. For the sake of simplicity, assume that we sample a random
permutation of size $n = 2^m$. The average number of random bits used is
then
\begin{align*}
\sum_{i=1}^m 2^{m-i} cost(2^i).
\end{align*}

We have seen that $cost(k) = k + \Theta (\sqrt{k} \log k)$. Thus, the
average number of random bits used to sample a random permutation of size
$n = 2^m$ is
\begin{align*}
\sum_{i=1}^m 2^{m-i} \left(2^i + \Theta \left(\sqrt{2^i} \log (2^i)\right)\right) = m 2^m + \Theta \left(2^m \sum_{i=1}^m \frac{i}{2^{i/2}}\right)
\end{align*}
which finally yields
\begin{equation*}
m 2^m + \Theta (2^m) = n \log_2 n + \Theta (n).
\end{equation*}

\section{BalancedShuffle}

For theoretical value, we also present a second algorithm, which
introduces an optimization which be believe has some worth.

\begin{bigcenter}
\includegraphics[scale=0.6]{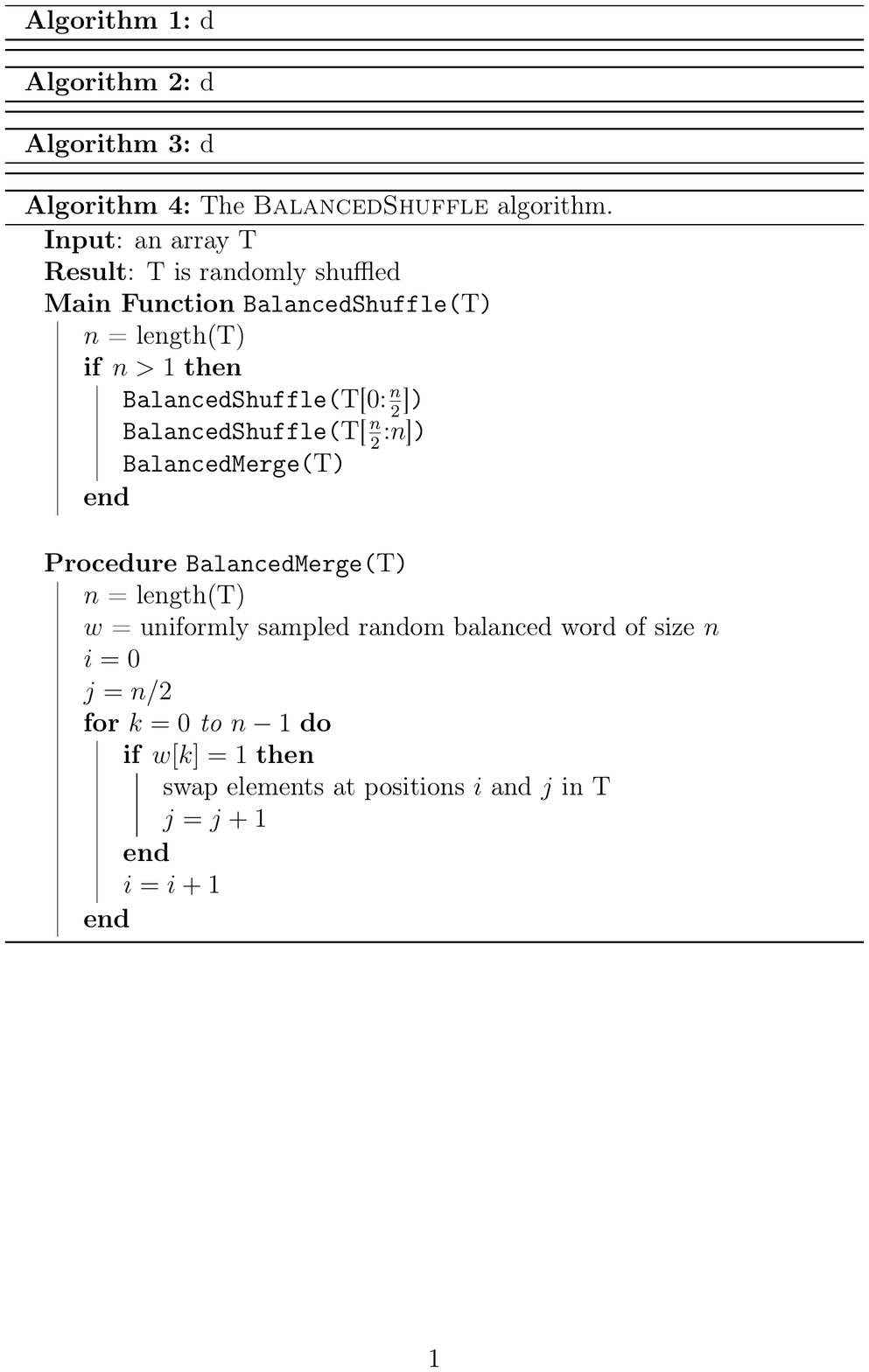}
\end{bigcenter}

\subsection{Balanced Word}

Inspired by Remy~\cite{Remy85}'s now classical and efficient algorithm to
generate random binary of exact size from the repeated drawing of random
integers, Bacher~\etal~\cite{BaBoJa14} produced a more efficient version
that uses, on average $2k + \Theta(\cramped{(\log k)^2})$. Binary trees,
which are enumerated by the Catalan numbers~\cite{Stanley15}, are in
bijection with Dyck words, which are balanced words containing as many
$0$'s as $1$'s. So Bacher~\etal's random tree generation algorithm can be
used to produce a balanced word of size $2k$ using very few extra bits.

\paragraph{Rationale.}

The idea behind using a balanced word is that it is more efficient, in
average number of bits.

Indeed, splitting processes (repeatedly randomly partition $n$ elements
until each is in its own partition), are well known to require
$n\log_2 n + O(n)$ bits on average---this is the path length of a random
trie~\cite{FlSe09}. The linear term comes from the fact that when
processes are partitioned in two subsets, these subsets are not of equal
size (which would be the optimal case), but can be very unbalanced;
furthermore, with small probability, it is possible that all elements
remain in the same set, especially in the lower levels.

On the other hand, if we are able to partition the elements into two
equal-sized subsets, we should be able to circumvent this issue. This idea
is useful here, and we believe, would be useful in other contexts as well.

\paragraph{Disadvantages.} The advantage is that using balanced words
allows to for a more efficient and sparing use of random bits (and since
random bits cost time to generate, this eventually translates to savings
in running time). However this requires a linear amount of auxiliary
space; for this reason, our BalancedShuffled algorithm is generally slower
than the other, in-place algorithms.

\subsection{Correctness}

Denote by $S_n$ the symmetric group containing all permutations of size $n$. Let $C(n,k)$ be the set of all words of length $n$ on the alphabet $\{0,1\}$ containing $k$ $0$'s and $n-k$ $1$'s. We have $|S_n| = n!$ and $C(n,k) = \binom{n}{k}$.

We first prove the following lemma:

\begin{lemma}
  Assume we have a list of $n$ elements and a list of $m$ other elements.
  Shuffle both of them uniformly at random independently (i.e. sample an
  element in $S_n$ and an element in $S_m$ independently and uniformly at
  random). Now sample a word in $C(n+m,n)$ uniformly at random. With the
  process from Bacher~\etal~\cite{BaBoJa14}, we obtain a list of size
  $n+m$ that is a uniformly sampled random permutation of the $n+m$
  elements.
\end{lemma}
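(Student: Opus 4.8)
The plan is to prove this lemma by a direct counting argument showing that every permutation of the $n+m$ combined elements arises with equal probability. The key observation is that the combined process has three independent sources of randomness: the shuffle of the first list (a uniform element of $S_n$), the shuffle of the second list (a uniform element of $S_m$), and the interleaving word drawn uniformly from $C(n+m,n)$. I would first argue that these three choices, taken together, uniquely determine the output list, and conversely that each output permutation corresponds to exactly one triple of choices.

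\medskip

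First I would set up the bijection. Given a target permutation $\sigma$ of the $n+m$ elements, I would read off which positions in the output are occupied by elements of the first list and which by elements of the second list; this recovers a unique word $w \in C(n+m,n)$ (a $0$ at each position filled from the first list, a $1$ at each position filled from the second). Then, reading the first-list elements in the order they appear in $\sigma$ recovers a unique permutation in $S_n$, and similarly the second-list elements recover a unique permutation in $S_m$. This gives a well-defined inverse map, so the map $(S_n) \times (S_m) \times C(n+m,n) \to S_{n+m}$ sending a triple to its interleaved output is a bijection. Counting confirms this: $n! \cdot m! \cdot \binom{n+m}{n} = (n+m)!$, so the domain and codomain have equal cardinality, and an injective (or surjective) map between equal finite sets is a bijection.

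\medskip

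Next I would translate the bijection into a probability statement. Since the three choices are made independently and uniformly, each triple occurs with probability $\frac{1}{n!} \cdot \frac{1}{m!} \cdot \frac{1}{\binom{n+m}{n}} = \frac{1}{(n+m)!}$. Because the interleaving map is a bijection onto $S_{n+m}$, each output permutation is the image of exactly one triple and therefore also occurs with probability $\frac{1}{(n+m)!}$. Hence the output is a uniformly random permutation of the $n+m$ elements, as claimed.

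\medskip

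The main obstacle I anticipate is justifying that the process of Bacher~\etal{}~\cite{BaBoJa14} genuinely realizes a \emph{uniform} draw from $C(n+m,n)$, and that combining the balanced word with the two shuffles really produces the interleaving described by the bijection rather than some correlated variant. The counting step is routine, but the care must go into verifying that drawing the balanced word and then interleaving the two independently shuffled lists according to that word is exactly the map I analyze; once the uniform sampling from $C(n+m,n)$ is granted (by the cited result) and the independence of the three sources is established, the uniformity of the output follows immediately from the bijection and the product of the individual probabilities.
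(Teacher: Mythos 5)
Your proposal is correct and follows essentially the same route as the paper: both establish that the map $S_n \times S_m \times C(n+m,n) \to S_{n+m}$ is a bijection via the cardinality identity $n!\,m!\binom{n+m}{n} = (n+m)!$, and then conclude uniformity of the output from uniformity and independence of the three inputs. If anything, your argument is slightly more complete, since you explicitly construct the inverse map (recovering the word and the two sub-permutations from the output), whereas the paper merely asserts surjectivity before invoking the count.
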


\begin{proof}
  We have defined a function

\begin{equation}
F : S_n \times S_m \times C(n+m,n) \rightarrow S_{n+m}.
\end{equation}
F is a surjection, because any given permutation of the $n+m$ elements can
be obtained by choosing adequate permutations of size $n$ and $m$, as well
as an adequate word in $C(n+m,n)$. Moreover, we have

\begin{equation}
\left|S_n \times S_m \times C(n+m,n)\right| = n! m! \binom{n+m}{n} = (n+m)! = |S_{n+m}|
\end{equation}
where $|\cdot|$ denotes the cardinality of a set. This implies that $F$ is
actually bijective. Thus, any element of $S_{n+m}$ has the same
probability of occurring, as it is obtained by a unique element of
$S_n \times S_m \times C(n+m,n)$. [end of proof]

If we use the ``bottom-up'' approach (we start with lists containing only
one element and work our way up), it follows by induction that the final
list is indeed a uniformly sampled random permutation.

If we use the ``top-down'' approach, the starting list is a uniformly
sampled random permutation of the final list, thus the final list is a
uniformly sampled random permutation of the starting list (the inverse of
a uniformly sampled random permutation is still a uniformly sampled random
permutation).
\end{proof}

\subsection{Average number of random bits}

We now give an estimate of the average number of random bits used by our
algorithm to sample a random permutation of size $n$. Let $cost(2k)$
denote the average number of random bits used to sample a random balanced
word of length $2k$ (an element of $C(2k,k)$). For the sake of simplicity,
assume that we sample a random permutation of size $n = 2^m$. The average
number of random bits used is then
\begin{align*}\label{bitcomplexity}
\sum_{i=1}^m 2^{m-i} cost(2^i).
\end{align*}

For the algorithm we have
$cost(2k) = 2k + \Theta (\log^2 k)$~\cite{BaBoJa14}. Thus, the average
number of random bits used to sample a random permutation of size
$n = 2^m$ is
\begin{align*}
\sum_{i=1}^m 2^{m-i} \left(2^i + \Theta\left(\log^2 (2^{i-1})\right)\right) = m 2^m + \Theta \left(2^m \sum_{i=1}^m \frac{i^2}{2^i}\right) = m 2^m + \Theta (2^m)
\end{align*}
which can be rewritten as

\begin{align*}
n \log_2 n + \Theta (n).
\end{align*}

\section{Simulations}

The simulations were run on a computing cluster with 40 cores. The
algorithms were implemented in C, and their parallel versions were
implemented using the OpenMP library, and delegating the distribution of
the threading entirely to it. 

Our algorithm, \textsc{MergeShuffle}, 

\begin{figure}[H]
\begin{center}
\includegraphics[scale=0.8]{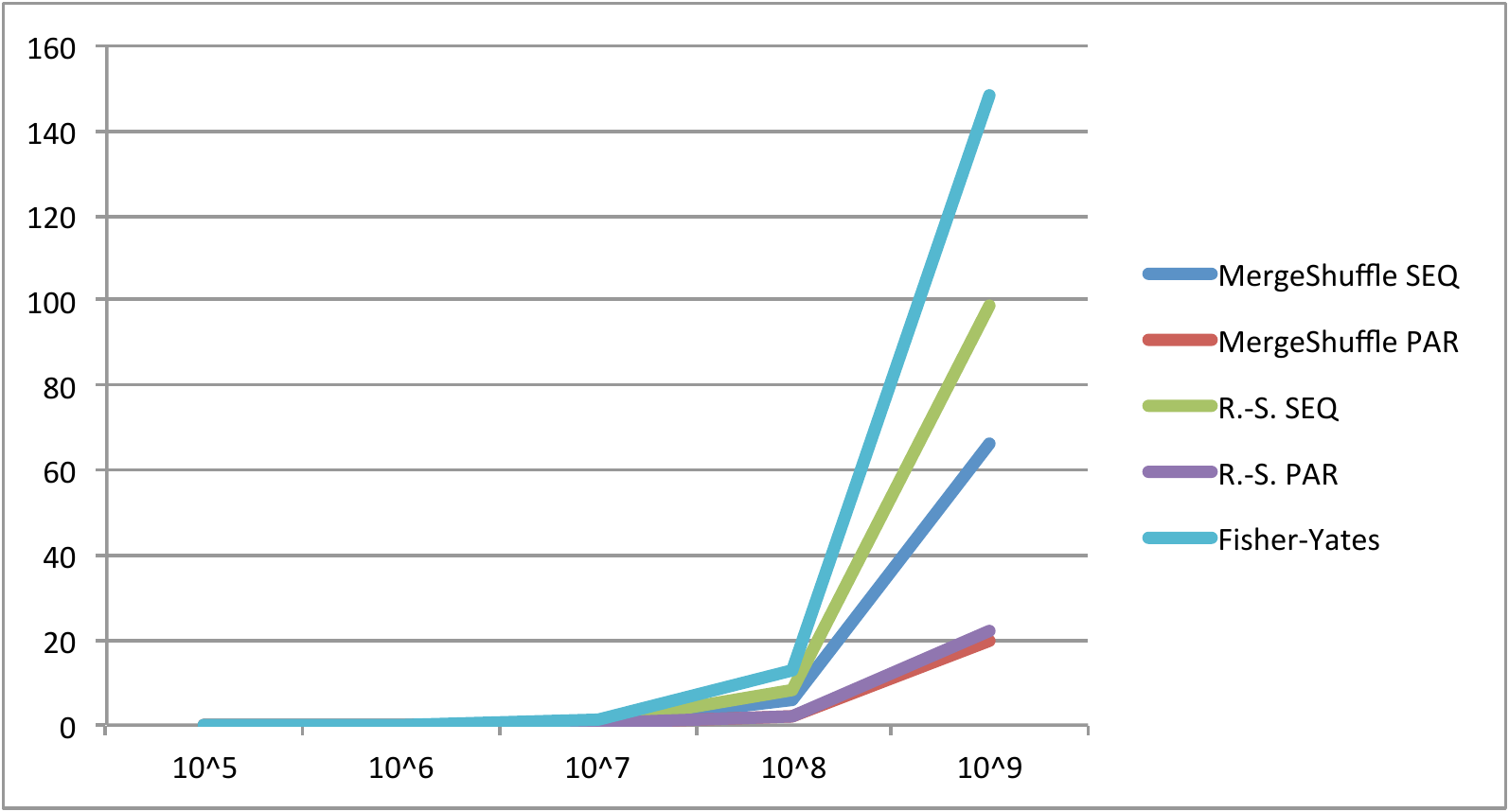}
\end{center}
\caption{Running times of several random permutation algorithms.
  Fisher-Yates shuffle, while extremely fast, gets slowed down once
  permutations are very large. Our parallel \textsc{MergeShuffle}
  algorithm is consistently faster than all algorithms, although the lead
  is not yet much compare to the Rao-Sandelius algorithm.}
\end{figure}

\begin{table}[t]
\begin{center}
\begin{tabular}{lrrrr}
  \toprule
  $n$	        &$10^5$	        &$10^6$	        &$10^7$	        &$10^8$\\\midrule
  Fisher-Yates	&1\,631\,434	&19\,550\,941	&229\,329\,728	&2\,628\,248\,831\\
  \textsc{MergeShuffle}	&1\,636\,560	&19\,686\,051	&231\,641\,075	&2\,650\,387\,993\\
  Rao-Sandelius	        &1\,631\,519	&19\,550\,449	&229\,327\,120	&2\,628\,251\,036\\
  \textsc{BalancedShuffle}	&1\,889\,034	&22\,046\,574	&	        &\\
  \bottomrule
\end{tabular}
\end{center}
\caption{Average number of random bits used by our implementation of various
  random permutation algorithms over 100 trials. (The current implementation of
  \textsc{BalancedShuffle} were in Python rather than C, and are prohibitively
  slow on larger permutations, but preliminary results show that it converges
  to an improved number of random bits.) }
\end{table}

\bibliographystyle{plain}
\bibliography{discrete-uniform,permutations,newg,additional}

\newpage
\appendix

\section{Code Listing for the \textsc{MergeSort} algorithm}

We reproduce here the most part of our algorithm, with some
OpenMP~\cite{DaEn98, Chandra01} hints. The full code can be obtained at
\url{https://github.com/axel-bacher/mergeshuffle}

\subsection{The merge procedure}
\begin{lstlisting}
// merge together two lists of size m and n-m
void merge(unsigned int *t, unsigned int m, unsigned int n) {
    unsigned int *u = t;
    unsigned int *v = t + m;
    unsigned int *w = t + n;

    // randomly take elements of the first and second list according to flips
    while(1) {
        if(random_bit()) {
            if(v == w) break;
            swap(u, v ++);
        } else
            if(u == v) break;
        u ++;
    }

    // now one list is exhausted, use Fisher-Yates to finish merging
    while(u < w) {
        unsigned int i = random_int(u - t + 1);
        swap(t + i, u ++);
    }
}
\end{lstlisting}

\subsection{The MergeSort algorithm itself}

\begin{lstlisting}
extern unsigned long cutoff;

void shuffle(unsigned int *t, unsigned int n) {
    // select q = 2^c such that n/q <= cutoff
    unsigned int c = 0;
    while((n >> c) > cutoff) c ++;
    unsigned int q = 1 << c;

    unsigned long nn = n;

    // divide the input in q chunks, use Fisher-Yates to shuffle them
    #pragma omp parallel for
    for(unsigned int i = 0; i < q; i ++) {
        unsigned long j = nn * i >> c;
        unsigned long k = nn * (i+1) >> c;
        fisher_yates(t + j, k - j);
    }

    for(unsigned int p = 1; p < q; p += p) {
        // merge together the chunks in pairs
        #pragma omp parallel for
        for(unsigned int i = 0; i < q; i += 2*p) {
            unsigned long j = nn * i >> c;
            unsigned long k = nn * (i + p) >> c;
            unsigned long l = nn * (i + 2*p) >> c;
            merge(t + j, k - j, l - j);
        }
    }
}
\end{lstlisting}

\end{document}